\pdfoutput=1 
% ****** Start of file apssamp.tex ******
%
%   This file is part of the APS files in the REVTeX 4.2 distribution.
%   Version 4.2a of REVTeX, December 2014
%
%   Copyright (c) 2014 The American Physical Society.
%
%   See the REVTeX 4 README file for restrictions and more information.
%
% TeX'ing this file requires that you have AMS-LaTeX 2.0 installed
% as well as the rest of the prerequisites for REVTeX 4.2
%
% See the REVTeX 4 README file
% It also requires running BibTeX. The commands are as follows:
%
%  1)  latex apssamp.tex
%  2)  bibtex apssamp
%  3)  latex apssamp.tex
%  4)  latex apssamp.tex
%
\documentclass[%
 reprint,
 superscriptaddress,
nobibnotes,
amsmath,amssymb,
aps,
prl,
longbibliography
]{revtex4-2}

\usepackage{mathtools}
\usepackage{amsmath}
\usepackage{amsthm}
\usepackage{amssymb}

\usepackage{hyperref}
\usepackage{graphicx}% Include figure files
\usepackage{dcolumn}% Align table columns on decimal point
\usepackage{bm}% bold math

\usepackage[capitalise,nameinlink]{cleveref}

\usepackage[authormarkuptext=name,commentmarkup=uwave]{changes}
\definechangesauthor[name={EW}, color=purple]{elie}
\definechangesauthor[name={MJ}, color=red]{matt}
\definechangesauthor[name={AT}, color=blue]{alisson}
\definechangesauthor[name={BA}, color=green]{barbara}
\definechangesauthor[name={.}, color=orange]{rev}

\newtheorem{axiom}{Axiom}

\newtheorem{exmp}{Example}
\newtheorem{thm}{Theorem}
\newtheorem{lemma}{Lemma}

\newtheorem{subaxiom}{Axiom}

\setcounter{axiom}{1}

\newcommand{\out}{u}
\newcommand{\outs}{\boldsymbol{u}}

\begin{document}

\preprint{APS/123-QED}

\title{Impossibility Theorem for Extending Quantum Contextuality to Disturbing Systems}% Force line breaks with \\
\thanks{An extended version of this paper is available at https://arxiv.org/pdf/2212.06976v1}%

\author{Alisson Tezzin}
\email{alisson.cordeiro@ime.eb.br}
\thanks{Funded by National Council for Scientific and Technological Development (CNPq) and CIMATEC; now at Military Institute of Engineering (IME), Defense Engineering Section, Rio de Janeiro, Brazil, and  QuIIN --- Quantum Industrial Innovation, EMBRAPII CIMATEC Competence Center in Quantum Technologies,
SENAI CIMATEC, Av.\ Orlando Gomes, 1845, Salvador, Bahia, 41850-010, Brazil}
\affiliation{
 University of São Paulo}

\author{Elie Wolfe}
\email{ewolfe@pitp.ca}
\thanks{Research at Perimeter Institute is supported in part by the Government of Canada through the Department of Innovation, Science and Economic Development and by the Province of Ontario through the Ministry of Colleges and Universities}
\affiliation{
 Perimeter Institute for Theoretical Physics
}

\author{Barbara Amaral}
\email{barbara\_amaral@usp.br}
\thanks{Funded by São Paulo Research Foundation (FAPESP) and Serrapilheira Institute}
\affiliation{
 University of São Paulo
}

\author{Matt Jones}
 \email{mcj@colorado.edu}
 \thanks{Research performed in part while at Google Brain and Google DeepMind}
\affiliation{
 University of Colorado}

\date{\today}% It is always \today, today,
             %  but any date may be explicitly specified

\begin{abstract}
Recently there has been much interest and progress in extending the definition of contextuality to systems with disturbance. We prove that such an endeavor cannot simultaneously satisfy the following principles: 
(1) any deterministic system is noncontextual;
(2) discarding information cannot turn a noncontextual system into a contextual one; 
(3) classical post-processing cannot create contextuality;
(4) the joint realization of two statistically independent noncontextual systems is noncontextual. 
We also prove the same result without principle 4, under a stronger version of principle 1.  
% 595 char with spaces (limit 600)
\end{abstract}

%\keywords{Suggested keywords}%Use showkeys class option if keyword
                              %display desired
\maketitle

The Kochen-Specker theorem demonstrates the impossibility of assigning values to all observables of a quantum system in a way that preserves functional relations between them \cite{ kochenspecker67,isham1998toposI,abramsky2011sheaf, landsman2017foundations}. This result challenges realist interpretations of quantum theory where observables possess values at all times and obstructs the existence of noncontextual hidden-variable models for quantum mechanics \cite{kochenspecker67, isham1998toposI, doring2005vonNeumann}. Experimental tests suggest this is true also of real physical systems: Relative frequencies observed in different measurement contexts of identically prepared systems can exhibit \textit{contextuality}, meaning they are incompatible with any joint probability distribution \cite{budroni2022contextualityReview, bartosik2009contextuality, lapkiewicz2011indivisible, qu2020photons, leupold2018ion,aspect1981experimental, hensen2015loophole}.
Contextuality is also a valuable resource for numerous computational tasks \cite{um2013random, delfosse2015rebits, Howard2014ContextualityST, vega2017qubits, budroni2022contextualityReview}.

Standard frameworks for contextuality apply only under the assumption that measurements within a context do not directly influence each other. This assumption entails 
\textit{nondisturbance}, that overlapping contexts (i.e., sets of co-measured observables) must agree on probabilities in their intersection \cite{abramsky2011sheaf,ramanathan2012generalized,popescu1994axiom,gleason1957measures},
as well as \textit{compatibility}, that if measurements within a context are performed sequentially the order does not matter \cite{guhne2010compatibility,larsson2011violating}.
These assumptions are sometimes justified by spacelike separation in Bell scenarios \cite{bell1964epr} or by no-fine-tuning principles in general contextuality scenarios \cite{wood2015tuning,cavalcanti2018tuning,pearl2021classical}.
However, when disturbance is observed, there is manifestly some direct influence between measurements and it does not make sense to restrict classical explanations to strictly noncontextual models \cite{Jones2019mcontextuality}. 

Unfortunately, most real experiments exhibit disturbance 
due to unavoidable imperfections in preparation or measurement apparatus and statistics of finite samples \cite{kirchmair2009state,adenier2006anomalies,adenier2017test,kujala2015extendedNoncontextuality,budroni2022contextualityReview}.
Moreover, there has been much recent interest in applying the concept of contextuality to fields beyond quantum physics, most notably 
human judgment and natural language \cite{dzhafarov2017cbd2.0, dzhafarov2017canonical, aerts2014quantum,wang2021analysing, mansfield2024developments, lo2022model, wang2021quantumlike}.
In these macroscopic settings disturbance is the rule rather than the exception.

To enable contextuality analysis in the presence of disturbance, several proposals for extended definitions of contextuality relax the assumptions of hidden-variable models in ways that still allow no-go conclusions.
These include allowing causal effects between measurements or from contexts to measurement outcomes
\cite{kirchmair2009state, guhne2010compatibility, winter2014does, kujala2015extendedNoncontextuality, vallee2024corrected}, 
allowing the model's predictions to be unconstrained in some proportion of runs \cite{Nagali, Marques}, or 
allowing nondeterministic outcomes conditioned on the hidden state \cite{vallee2024corrected}.
These relaxations lead to extended noncontextuality inequalities 
that apply to data with disturbance
and that reduce to standard Kochen-Specker inequalities when disturbance is absent.
The resulting theories are often cast as corrections to account for assumption violations, but they can also be taken as extending the notion of contextuality itself from a physical principle to an abstract theory of random variables, most notably in the theory of contextuality-by-default (CbD) \cite{dzhafarov2016cbdtheory,dzhafarov2017cbd2.0,kujala2021dichotomization}.

A conceptual difficulty with all these approaches is that they require auxiliary assumptions to bound the magnitude of the relevant assumption violation (which is necessarily unobservable). 
Specifically, they limit the violation to the minimum required to account for certain marginal statistics of the data, including
differences in marginal expectations between contexts \cite{kujala2015extendedNoncontextuality, vallee2024corrected}
or between orderings of sequential measurements \cite{guhne2010compatibility, Jerger},
or violation of repeatability \cite{kirchmair2009state, guhne2010compatibility, vallee2024corrected} or exclusivity \cite{Nagali, D'Ambrosio, Marques}.
This choice implies the most liberal possible criterion for contextuality.
This is a tenuous position, because 
once some degree of deviation from noncontextuality is known to be present, 
it is hard to argue that it could not be stronger than the marginal statistics reveal \cite{atmanspacher2019contextuality,yearsley2019contextuality,Jones2019mcontextuality,vallee2024corrected}.
Nevertheless, researchers have employed these extended theories to
certify and quantify contextuality and nonlocality in a variety of theoretical and experimental settings both in and outside physics
\cite{bacciagaluppi2014leggett,kujala2015extendedNoncontextuality,zhan2017experimental,fluhmann2018sequential, malinowski2018probing, kupczynski2021contextuality, wang2022significant,khrennikov2022complementarity, kirchmair2009state, Marques, Nagali, Jerger, D'Ambrosio, dzhafarov2016contextuality,cervantes2018snow,basieva2019true}.
Considering the strong physical (and metaphysical) conclusions regarding classical hidden-variable accounts often derived from these results
\cite{fluhmann2018sequential,malinowski2018probing,wang2022significant,kirchmair2009state,Jerger,zhan2017experimental,wang2022significant}, 
understanding the logical implications of extensions of contextuality is a pressing issue.

In this Letter, rather than debating the assumptions of existing proposals for extended contextuality, we ask a more fundamental question:
\emph{Can any extension of contextuality be self-consistent in the ways standard Kochen-Specker contextuality is?}
We consider four consistency principles meant to capture essential properties of KS contextuality, whereby if one pattern of data is labeled noncontextual then so must certain others.
We then prove that no extension to disturbing systems can satisfy all of them.
This analysis builds on similar efforts in the development of CbD to make it internally consistent in ways that are faithful to KS contextuality \cite{dzhafarov2017cbd2.0,kujala2021dichotomization}
and the first two principles are taken from that work.
{\bf Determinism} is the principle that deterministic systems are noncontextual. A stronger form is Deterministic Redundancy, which holds that introducing deterministic observables cannot change a noncontextual system to a contextual one \cite{kujala2021dichotomization}. 
{\bf Monotonicity} is the principle that disregarding information from a noncontextual system cannot make it contextual.
It encompasses Nestedness (subsystems of noncontextual systems cannot be contextual) and Coarse-graining (treating some values of an observable as identical cannot make a noncontextual system contextual) \cite{kujala2021dichotomization}.
In addition, we propose principles of {\bf Post-processing}, that classical computation on the outcomes of jointly measured observables cannot create contextuality, and {\bf Independence}, that the joint realization of two statistically independent noncontextual systems is noncontextual. Post-processing and Independence both play an important role in current research as free operations that cannot create or increase contextuality \cite{shane2019comonadic,Karvonen2019categories,barbosa2021closing}.

After formalizing these principles in a set of axioms, we prove in \cref{thm:KS-satisfies-axioms} that KS contextuality satisfies them all. 
However, if one attempts to extend contextuality beyond nondisturbing systems to include disturbing ones, we prove our axioms cannot be jointly satisfied.
Our main proof (\cref{thm:impossibility-a}) starts with a trivially noncontextual system with disturbance and applies a series of transformations compatible with the proposed axioms to arrive at a KS-contextual system.
We give an alternative proof without assuming Independence that uses Deterministic Redundancy instead of Determinism (\cref{thm:impossibility-b}).
Both proofs involve only binary measurements, and thus our results also apply to restricted extensions of contextuality that are defined only on binary systems \cite{dzhafarov2017cbd2.0}.

We follow the ``model-independent'' formulation of contextuality used in most modern work, which generalizes the notions of measurement and compatibility beyond quantum mechanics \cite{abramsky2011sheaf,acin2015combinatorial,CSW, amaral2018graph,kujala2015extendedNoncontextuality}. 
Our mathematical setup mirrors this previous work \cite{abramsky2011sheaf,kujala2015extendedNoncontextuality}. A \textit{finite measurement scenario} is a quadruple $\mathcal{S} \equiv (\mathcal{Q}, \mathcal{C}, \prec,  \mathcal{O})$, where $\mathcal{Q}$ and $\mathcal{C}$ are finite sets, $\prec$ is a relation in $\mathcal{Q} \times \mathcal{C}$, and $\mathcal{O} \equiv (O_{q})_{q \in \mathcal{Q}}$ is a family of finite sets. We say $\mathcal{Q}$ is a set of observables, $\mathcal{C}$ is a set of contexts, and $q\prec c$ indicates that observable $q$ is measured in context $c$.
$O_{q}$ represents the set of possible outcomes for observable $q$, while $O^{c} \equiv \prod_{q \prec c}O_{q}$ is the set of possible joint outcomes for $\boldsymbol{c} \equiv \{ q \in \mathcal{Q}: q \prec c \}$. 
Notice we allow for distinct contexts $c\ne c'$ with $\boldsymbol{c}=\boldsymbol{c'}$, differing for example in order of measurement \cite{wang2014judments,Dzhafarov2015conversations,dzhafarov2016cbdtheory,dzhafarov2016overview,Dzhafarov2023sheaf} though we could also append a unique dummy observable to each context.
We focus on scenarios with finitely many observables, contexts and outcomes, because these are sufficient to prove our main result. More general definitions are in Ref. \cite{Jones2019mcontextuality} and elsewhere. 

In KS contextuality a context is defined as any set of compatible (i.e., co-measurable) observables 
and therefore can be identified with the set of observables measured in it ($c\equiv\boldsymbol{c}$). 
Nondisturbance (as defined below) implies any two contexts with the same set of observables must be empirically indistinguishable. 
However, when extending contextuality to disturbing behaviors it can be useful to allow distinct contexts to contain the same observables (i.e., $c\ne c'$ with $\boldsymbol{c}=\boldsymbol{c'}$) when they are distinguished by some other experimental factor such as the order in which the measurements are made \cite{wang2014judments,Dzhafarov2015conversations,dzhafarov2016cbdtheory,dzhafarov2016overview,Dzhafarov2023sheaf}.
Therefore we adopt a broader definition of context as any well-defined and replicable condition under which an experimenter measures one or more observables,
and we define $\mathcal{C}$ as an abstract set of labels for these conditions.
Importantly, the two definitions of context are equivalent because one can always append a unique dummy observable to each context to encode the value of any relevant experimental factor(s).

A \textit{behavior} on $\mathcal{S}$ is a family of probability distributions $P(\cdot| c)$ on $O^{c}$ for all $c\in\mathcal{C}$. For any $\outs \in O^{c}$, $P(\outs|c)$ is the probability of obtaining joint outcome $\outs$ in context $c$. For any collection of observables $\boldsymbol{q} \subset \boldsymbol{c}$, $P_{\boldsymbol{q}}(\cdot\vert c)$ denotes the marginal of $P(\cdot \vert c)$ on $O_{\boldsymbol{q}} \equiv \prod_{q \in \boldsymbol{q}} O_{q}$. If $\boldsymbol{q}$ is a singleton $\{q\}$, we write $P_{q}(\cdot\vert c)$ instead.  Other shorthands should be self-explanatory, such as  $P_{(q_{1},q_{2})}(\out_{1},\out_{2}\vert c)$ to express joint probabilities for $\{q_{1},q_{2}\}$ when $\out_{1}\in O_{q_{1}}$ and $\out_{2}\in O_{q_{2}}$.

\begin{exmp}[Behavior]\label{ex: behavior} Let $\mathcal{S}$ be a scenario given by two binary observables $q_{1},q_{2}$ and four contexts $c_{1},\dots,c_{4}$ that contain them both. The following tables represent a behavior $P^{(1)}$ on $\mathcal{S}$, where the entry in row $q_{1}=i$ and column $q_{2}=j$ of the table $P^{(1)}( \cdot |c_{k})$ represents the probability $P^{(1)}_{(q_{1},q_{2})}(i,j|c_{k})$.

\vspace{2mm}
\begin{tabular}{c|cccc|cccc}
$\boldsymbol{P^{(1)}(\cdot|c_{1})}$ & $q_{2}=0$ & $q_{2}=1$ &  & 
$\boldsymbol{P^{(1)}(\cdot|c_{2})}$ & $q_{2}=0$ & $q_{2}=1$ &  & \tabularnewline
\cline{1-3} \cline{2-3} \cline{3-3} \cline{5-7} \cline{6-7} \cline{7-7} 
$q_{1}=0$ & $0$ & $\frac{1}{2}$ &  & 
$q_{1}=0$ & $0$ & $\frac{1}{2}$ &  & \tabularnewline
$q_{1}=1$ & $0$ & $\frac{1}{2}$ &  & 
$q_{1}=1$ & $0$ & $\frac{1}{2}$ &  & \tabularnewline
\end{tabular}

\vspace{2mm}
\begin{tabular}{c|cccc|cccc}
$\boldsymbol{P^{(1)}(\cdot|c_{3})}$ & $q_{2}=0$ & $q_{2}=1$ &  & 
$\boldsymbol{P^{(1)}(\cdot|c_{4})}$ & $q_{2}=0$ & $q_{2}=1$\tabularnewline
\cline{1-3} \cline{2-3} \cline{3-3} \cline{5-7} \cline{6-7} \cline{7-7} 
$q_{1}=0$ & $0$ & $\frac{1}{2}$ &  & 
$q_{1}=0$ & $\frac{1}{2}$ & $0$\tabularnewline
$q_{1}=1$ & $0$ & $\frac{1}{2}$ &  & 
$q_{1}=1$ & $\frac{1}{2}$ & $0$\tabularnewline
\end{tabular}

\end{exmp}

A behavior $P$ is \textit{nondisturbing} if $P_{\boldsymbol{q}}(\cdot |c) = P_{\boldsymbol{q}}(\cdot |c')$ whenever $\boldsymbol{q} \subset \boldsymbol{c} \cap \boldsymbol{c'}$ \cite{abramsky2011sheaf, amaral2018graph}. Otherwise, $P$ is \textit{disturbing}.
The behavior in Example \ref{ex: behavior} is disturbing because $P^{(1)}_{q_2}(\cdot|c_4) \ne P^{(1)}_{q_2}(\cdot|c_i)$ for $i=1,2,3$.
Notice this definition can encompass order effects as often used to test compatibility with sequential measurements \cite{guhne2010compatibility,Marques,Jerger}, by treating different orderings as distinct contexts.

A nondisturbing behavior $P$ is \textit{KS noncontextual} if there exists a global probability assignment for it, namely a probability distribution $\overline{P}$ on $\prod_{q \in \mathcal{Q}}O_{q}$ whose marginals match $P$ in each context: $\overline{P}_{\boldsymbol{c}}(\cdot)=P(\cdot|c)$ for all $c\in\mathcal{C}$ \cite{fine1982hidden,abramsky2011sheaf, amaral2018graph}. Otherwise, $P$ is KS-contextual. 
Thus, KS noncontextuality formalizes the realist notion that all observables are (probabilistically) well-defined regardless of which subset are measured.

We define an \textit{extension of contextuality} as a partition of all behaviors, including disturbing ones, that labels every behavior as either contextual or noncontextual and that agrees with KS contextuality for nondisturbing behaviors. 
A \textit{restricted extension of contextuality} is one that applies only to a certain class of behaviors, such as those involving only binary observables \cite{dzhafarov2017canonical,kujala2021dichotomization}.

Following the model-independent view of contextuality, we assume contextuality is fully determined by the relationships among the contexts, observables, and outcome distributions. Thus we require the contextuality status of a behavior to be unchanged if it is redefined on a scenario that is isomorphic to the original one (by relabeling contexts, observables, or outcomes). 
This is a standard assumption \cite{abramsky2011sheaf, amaral2018graph, kujala2015extendedNoncontextuality, budroni2022contextualityReview}, although we suggest below that one way to escape our main result may be to define extended contextuality in a manner that depends on the physical arrangements of measurement systems.

Our main question is whether there exist extensions of contextuality that are faithful to the four principles listed above.
We formalize the principles as axioms and offer intuitive arguments that they are central to the original notion of contextuality, and then prove a negative answer to the question.

{\bf Determinism} can be motivated from quantum mechanics, where contextuality is possible only for incompatible observables \cite{khrennikov2021can}, which in turn must be nondeterministic (since a deterministic quantum observable commutes with all other observables). 
Likewise, causal interpretations of contextuality treat contextuality as requiring causal influence that is fine-tuned or otherwise hidden in observable distributions \cite{cavalcanti2018tuning,Jones2019mcontextuality,wood2015tuning}. This is not possible in a deterministic system, where all causal influences are fully observable. 

A behavior $P$ is deterministic if for each $c$ there is a $\outs_c \in O^{c}$ satisfying $P(\outs_c|c) = 1$. Equivalently, given any $q\prec c$, there exists $\out_{q,c}\in O_q$ with $P_q(\out_{q,c}|c) = 1$ \footnote{This should not be confused with a deterministic model, in which outcomes are deterministic conditioned on stochastic hidden states \cite{abramsky2011sheaf}.}.

\begin{subaxiom}[Determinism]\label{ax: determinism}
Any deterministic behavior is noncontextual.
\end{subaxiom}

We also consider the stronger axiom that adding deterministic variables to a noncontextual system must leave it noncontextual.

\begin{subaxiom}[Deterministic Redundancy \cite{kujala2021dichotomization}]\label{ax: deterministicRedundancy}
Let $P$ be a noncontextual behavior including a context $c\in\mathcal{C}$, and let $q$ be an observable satisfying either $q\nprec c$ or $q\notin\mathcal{Q}$. Let $P'$ be the expanded behavior defined by adding $q$ to $c$ with some deterministic outcome $\out$. That is, $\mathcal{Q}'=\mathcal{Q}\cup\{q\}$, $\prec'\,=\,\prec\cup\,\{(q,c)\}$ (i.e., $q\prec'c$), and $P'_q(\out\vert c)=1$. Then $P'$ is also noncontextual.
\end{subaxiom}

{\bf Monotonicity} can be motivated from contextuality's roots in the Einstein-Podolsky-Rosen question of the completeness of quantum mechanics \cite{einstein1935incomplete}, later formalized by Bell \cite{bell1964epr} and Kochen and Specker \cite{kochenspecker67} as a question of whether correlations between measurements can be explained by unobserved information. 
Contextuality embodies the fact that the answer to this question is negative.
Therefore measuring more information from a contextual system should never make it noncontextual. Equivalently, discarding information from a noncontextual system should never make it contextual.

One way to remove information is to remove one or more observables from one or more contexts. Following \cite{kujala2021dichotomization}, we say scenarios 
$\mathcal{S} = \{\mathcal{Q}, \mathcal{C}, \prec, \mathcal{O}\}$
and
$\mathcal{S}' = \{\mathcal{Q}', \mathcal{C}', \prec', \mathcal{O}'\}$
are \emph{nested} if 
$\mathcal{Q}'\subseteq\mathcal{Q}$,
$\mathcal{C}'\subseteq\mathcal{C}$,
$q\prec c$ whenever $q\prec'c$,
and $O'_q=O_q$ for all $q\in\mathcal{Q}'$.
Given a behavior $P$ on $\mathcal{S}$, the \emph{restriction} $P'$ to $\mathcal{S'}$ is defined by marginalizing $P(\cdot|c)$ for every $c\in\mathcal{C}'$: $P'(\cdot|c) = P_{\boldsymbol{c}'}(\cdot|c)$ where $\boldsymbol{c}'=\{q\in\mathcal{Q}':q\prec'c\}$. 
If $P'$ is a restriction of $P$, $P$ is said to be an \emph{expansion} of $P'$.

\stepcounter{axiom}\setcounter{subaxiom}{0}
\begin{subaxiom}[Nestedness \cite{kujala2021dichotomization}]\label{ax: Nestedness}
Any restriction of a noncontextual behavior is noncontextual.
\end{subaxiom}

A second way of removing information is by \emph{coarse-graining} an observable; that is, failing to distinguish between certain of its values \cite{kujala2021dichotomization}. Formally, coarse-graining involves replacing $q$ with $q'\equiv g(q)$ for some (typically many-to-one) function $g$. Given a behavior $P$, the coarse-grained behavior $P'$ is defined by $\mathcal{Q}'=\mathcal{Q}\cup\{q'\}\setminus\{q\}$ with $q'\prec' c$ whenever $q\prec c$ and $P'_{q'}(\out'|,c) = \sum_{g(\out)=\out'}P_{q}(\out|c)$ (extended in the natural way to joint probabilities with other observables).

\begin{subaxiom}[Coarse-graining \cite{kujala2021dichotomization}]\label{ax: Coarse-graining}
Any coarse-graining of a noncontextual behavior is noncontextual.    
\end{subaxiom}

{\bf Post-processing} can be motivated from the original conception of contextuality as a property that cannot be explained by certain classical theories, specifically hidden-variable models in which each observable is a deterministic function of some latent state \citep{bell1964epr,kochenspecker67}. This suggests that appending observables that are deterministic functions of the existing ones cannot create contextuality.
Likewise, resource theories of contextuality define such post-processings as free operations that cannot increase the degree of contextuality \cite{shane2019comonadic,Karvonen2019categories}.

We define a post-processing of a collection of observables $\boldsymbol{q}=\{q_{1}\dots,q_n\}$ as a deterministic function $f(q_{1}, \dots, q_{n})$. This yields a new observable $q' \equiv f(\boldsymbol{q})$ with $q'\prec c$ whenever $\boldsymbol{q}\prec c$. 
Given a behavior $P$, the expanded behavior $P'$ incorporating $q'$ is defined by $\mathcal{Q}'=\mathcal{Q}\cup\{q'\}$ with $P'_{(\boldsymbol{q},q')}(\boldsymbol{\out},\out'|c)$ equal to $P_{\boldsymbol{q}}(\boldsymbol{\out}|c)$ if $\out'=f(\boldsymbol{\out})$ and zero otherwise. 
In the special case when $f$ is injective, $q'$ can be identified with the joint outcome of $q_1,\dots,q_n$, an operation that has been called \emph{joining} \cite{dzhafarov2017canonical}. It is easy to see that any post-processing can be obtained as a composition of joining followed by coarse-graining.

\begin{axiom}[Post-processing]
\label{ax: post-processing}
Given any noncontextual behavior and a post-processing $f(q_{1},\dots q_{n})$ of observables in that scenario, the expanded behavior obtained by incorporating $q'\equiv f(q_1,\dots,q_n)$ is also noncontextual.
\end{axiom}

A further motivation for \cref{ax: post-processing} is the view of contextuality as an objective property of physical systems independent of how experimenters encode their measurements, when different encodings are logically equivalent. This is illustrated in the following example.

\begin{exmp}[Logically equivalent encodings]\label{ex: post-processing}  Given $P^{(1)}$ from Example \ref{ex: behavior}, define the post-processing $q_{3}\equiv f\left(q_{1},q_{2}\right)$ by $f(a,b) = I_{\{a=b\}}$. That is, $q_{3}=1$ if $q_{1}=q_{2}$ and $q_{3}=0$ if $q_{1}\ne q_{2}$. Incorporating $q_3$ and then dropping $q_2$ (by Nestedness) yields the following behavior $P^{(2)}$:

\vspace{2mm}
\begin{tabular}{c|cccc|cccc}
$\boldsymbol{P^{(2)}(\cdot|c_{1})}$ & $q_{3}=0$ & $q_{3}=1$ &  & 
$\boldsymbol{P^{(2)}(\cdot|c_{2})}$ & $q_{3}=0$ & $q_{3}=1$ &  & \tabularnewline
\cline{1-3} \cline{2-3} \cline{3-3} \cline{5-7} \cline{6-7} \cline{7-7} 
$q_{1}=0$ & $\frac{1}{2}$ & $0$ &  & 
$q_{1}=0$ & $\frac{1}{2}$ & $0$ &  &\tabularnewline
$q_{1}=1$ & $0$ & $\frac{1}{2}$ &  & 
$q_{1}=1$ & $0$ & $\frac{1}{2}$ &  & \tabularnewline
\end{tabular}
\vspace{2mm}

\begin{tabular}{c|cccc|cccc}
$\boldsymbol{P^{(2)}(\cdot|c_{3})}$ & $q_{3}=0$ & $q_{3}=1$ &  & 
$\boldsymbol{P^{(2)}(\cdot|c_{4})}$ & $q_{3}=0$ & $q_{3}=1$\tabularnewline 
\cline{1-3} \cline{2-3} \cline{3-3} \cline{5-7} \cline{6-7} \cline{7-7} 
$q_{1}=0$ & $\frac{1}{2}$ & $0$ &  & 
$q_{1}=0$ & $0$ & $\frac{1}{2}$\tabularnewline
$q_{1}=1$ & $0$ & $\frac{1}{2}$ &  & 
$q_{1}=1$ & $\frac{1}{2}$ & $0$\tabularnewline
\end{tabular}\medskip{}
\vspace{2mm}

Conversely, $P^{(1)}$ can be obtained from $P^{(2)}$ by incorporating the post-processing $q_2 \equiv f(q_1,q_3)$ and then dropping $q_3$. Thus Axioms \ref{ax: Nestedness} and \ref{ax: post-processing} require that if $P^{(1)}$ is noncontextual then so is $P^{(2)}$, and vice versa. This is reasonable because $q_1,q_2$ and $q_1,q_3$ are logically equivalent ways of representing the same observed information.

\end{exmp}

{\bf Independence} can be motivated from the common description of contextuality as the combination of local consistency and global inconsistency \cite{abramsky2011sheaf, barbosa2021closing}. 
If two subsystems are each globally consistent (noncontextual), and they co-exist but are statistically independent, then they should also be globally consistent when taken together. The independent union of systems was introduced by Abramsky et al.\ \cite{abramsky2017fraction} as a free operation that cannot create or increase contextuality. It was also used by Cabello to derive a bound on nondisturbing contextual behaviors that coincides with the quantum bound \cite{cabello2019simple}.
Independence can be assumed to arise whenever the systems are jointly measurable but have negligible common history, as with experiments in separate labs by communicating observers (excluding superdeterminism). 

Let $\mathcal{S}_{1}$ and $\mathcal{S}_{2}$ be disjoint scenarios, so that $\mathcal{Q}_{1}\cap \mathcal{Q}_{2} = \emptyset = \mathcal{C}_{1} \cap \mathcal{C}_{2}$. We define the \emph{product scenario} $\mathcal{S}_1\otimes\mathcal{S}_2$ by taking the union of observables $\mathcal{Q}_1\cup\mathcal{Q}_2$ and the Cartesian product of contexts $\mathcal{C}_1\times\mathcal{C}_2$. Each context $(c_1,c_2)$ is a pair of contexts for the two subsystems, with $q\prec(c_1,c_2)$ iff $q\prec c_1$ (for $q\in\mathcal{Q}_1$) or $q\prec c_2$ (for $q\in\mathcal{Q}_2$).
Given behaviors $P^{(1)}$ on $\mathcal{S}_1$ and $P^{(2)}$ on $\mathcal{S}_2$, the \emph{product behavior} $P^{(1)}\otimes P^{(2)}$ is defined by assuming the two systems are statistically independent in every joint context: $(P^{(1)}\otimes P^{(2)})_{(\boldsymbol{q_1},\boldsymbol{q_2})}(\outs_1,\outs_2|(c_1,c_2)) = P^{(1)}_{\boldsymbol{q_1}}(\outs_1|c_1)P^{(2)}_{\boldsymbol{q_2}}(\outs_2|c_2)$ for all collections of observables $\boldsymbol{q_1}\prec c_1$ and $\boldsymbol{q_2}\prec c_2$ and all joint outcomes $\outs_1$ and $\outs_2$. Thus $P^{(1)} \otimes P^{(2)}$ represents the joint realization of two statistically independent behaviors $P^{(1)}$ and $P^{(2)}$.  

\begin{axiom}[Independence]\label{ax: independence} The product of noncontextual behaviors is noncontextual.
\end{axiom}

These six axioms are all satisfied by KS contextuality when restricted to nondisturbing behaviors (see End Matter for proof):

\begin{thm}
\label{thm:KS-satisfies-axioms}
KS contextuality satisfies
Determinism, Deterministic Redundancy, Nestedness, Coarse-graining, Post-Processing, and Independence (Axioms \ref{ax: determinism}, \ref{ax: deterministicRedundancy}, \ref{ax: Nestedness}, \ref{ax: Coarse-graining}, \ref{ax: post-processing}, \ref{ax: independence}).
\end{thm}

Our main result is that no extension of contextuality to disturbing systems is consistent with the proposed axioms. We prove this first using only Axioms \ref{ax: determinism}, \ref{ax: Nestedness}, \ref{ax: post-processing} and \ref{ax: independence}. The proof begins with two trivially noncontextual behaviors---one nondisturbing and KS-noncontextual, and the other disturbing and deterministic---and applies the free operations entailed by the other axioms to obtain a KS-contextual behavior (a PR box \cite{popescu1994axiom}).

\begin{thm}
\label{thm:impossibility-a}There is no extension of contextuality that satisfies Determinism, Nestedness, Post-processing and Independence (Axioms \ref{ax: determinism}, \ref{ax: Nestedness}, \ref{ax: post-processing} and \ref{ax: independence}).
\end{thm}
\begin{proof}
Let $P^{\rm A}$ be a coin flip: a single binary observable $q_1$ measured in a single context (which we suppress in our notation) with probabilities $P^{\rm A}_{q_1}(0)=P^{\rm A}_{q_1}(1)=\frac{1}{2}$. Let $P^{\rm B}$ be a deterministic behavior with a single binary observable $q_2$ measured in four contexts, with $P^{\rm B}_{q_2}(1|c_{1}) = P^{\rm B}_{q_2}(1|c_{2})=P^{\rm B}_{q_2}(1|c_{3})=P^{\rm B}_{q_2}(0|c_{4}) = 1$
($c_1..c_4$ could be defined by co-measuring $q_2$ with ancillary observables $q_1^+..q_4^+$, which we suppress).
$P^{\rm A}$ is KS-noncontextual, and $P^{\rm B}$ is noncontextual by Axiom \ref{ax: determinism}. 

The product behavior $P^{\rm A}\otimes P^{\rm B}$ is $P^{(1)}$ from Example \ref{ex: behavior}. Therefore Axiom \ref{ax: independence} implies $P^{(1)}$ is also noncontextual.
Define the post-processing $q_{3}\equiv f\left(q_{1},q_{2}\right)$ by $q_{3}=0$ if $q_{1}\ne q_{2}$ and $q_{3}=1$ if $q_{1}=q_{2}$. Incorporating $q_3$ and dropping $q_{2}$ yields $P^{(2)}$ from Example \ref{ex: post-processing}, which is noncontextual by Axioms \ref{ax: Nestedness} and \ref{ax: post-processing}. 

Finally, relabel $q_{1}$ as $q_{4}$ in $c_{3}$ and $c_{4}$, and relabel $q_{3}$ as $q_{5}$ in $c_{2}$ and $c_{4}$. This relabeling can be done by post-processing with $f$ as the identity function ($q_4\equiv q_1$, $q_5\equiv q_3$) followed by nestedness (dropping $q_1$ from $c_3,c_4$ and $q_3$ from $c_2,c_4$). The result is the following behavior $P^{(3)}$, which must be noncontextual by Axioms \ref{ax: Nestedness} and \ref{ax: post-processing}:

\vspace{2mm}
\begin{tabular}{c|cccc|cccc}
$\boldsymbol{P^{(3)}(\cdot|c_{1})}$ & $q_{3}=0$ & $q_{3}=1$ &  & 
$\boldsymbol{P^{(3)}(\cdot|c_{2})}$ & $q_{5}=0$ & $q_{5}=1$ &  & \tabularnewline
\cline{1-3} \cline{2-3} \cline{3-3} \cline{5-7} \cline{6-7} \cline{7-7} 
$q_{1}=0$ & $\frac{1}{2}$ & $0$ &  & 
$q_{1}=0$ & $\frac{1}{2}$ & $0$ &  & \tabularnewline
$q_{1}=1$ & $0$ & $\frac{1}{2}$ &  & 
$q_{1}=1$ & $0$ & $\frac{1}{2}$ &  & \tabularnewline
\end{tabular}

\vspace{2mm}
\begin{tabular}{c|cccc|cccc}
$\boldsymbol{P^{(3)}(\cdot|c_{3})}$ & $q_{3}=0$ & $q_{3}=1$ &  & 
$\boldsymbol{P^{(3)}(\cdot|c_{4})}$ & $q_{5}=0$ & $q_{5}=1$\tabularnewline
\cline{1-3} \cline{2-3} \cline{3-3} \cline{5-7} \cline{6-7} \cline{7-7} 
$q_{4}=0$ & $\frac{1}{2}$ & $0$ &  & 
$q_{4}=0$ & $0$ & $\frac{1}{2}$\tabularnewline
$q_{4}=1$ & $0$ & $\frac{1}{2}$ &  & 
$q_{4}=1$ & $\frac{1}{2}$ & $0$\tabularnewline
\end{tabular}
\vspace{2mm}

\noindent However, this is a PR box \cite{popescu1994axiom}, which is a KS-contextual nondisturbing behavior. Thus we have a contradiction.
\end{proof}

Note that this proof does not use the full Post-processing axiom (Axiom \ref{ax: post-processing}), but the milder version mentioned in Example \ref{ex: post-processing} where we merely re-encode the observables using a lossless transformation $(q_1, q_2) \leftrightarrow (q_1, q_3)$.

Note also that this proof uses only binary observables, so it implies the stronger result that the axioms cannot be satisfied by any restricted extension of contextuality limited to binary observables.

By using the stronger form of the Determinism principle, namely Axiom \ref{ax: deterministicRedundancy} instead of Axiom \ref{ax: determinism}, we can prove the same result without the Independence principle:

\begin{thm}
\label{thm:impossibility-b}There is no extension of contextuality that satisfies Deterministic Redundancy, Nestedness, and Post-processing (Axioms \ref{ax: deterministicRedundancy}, \ref{ax: Nestedness} and \ref{ax: post-processing}).
\end{thm}

\begin{proof}
Let $P^{\rm C}$ be a coin flip measured in four contexts, with $P^{\rm C}_{q_1}(0|c_i)=P^{\rm C}_{q_1}(1|c_i)=\frac{1}{2}$ for $i=1,2,3,4$ (possibly including ancillary $q_i^+$ as in the proof of \cref{thm:impossibility-a}). This is KS-noncontextual. We can obtain $P^{(2)}$ (from Example \ref{ex: post-processing} and the proof of Theorem \ref{thm:impossibility-a}) as an expansion of $P^{\rm C}$ by adding observable $q_2$ such that it is deterministic in every context, specifically $P^{(2)}_{q_2}(1|c_{1}) = P^{(2)}_{q_2}(1|c_{2})=P^{(2)}_{q_2}(1|c_{3})=P^{(2)}_{q_2}(0|c_{4}) = 1$. Therefore $P^{(2)}$ is noncontextual by Axiom \ref{ax: deterministicRedundancy}. The rest of the proof matches that of Theorem \ref{thm:impossibility-a}.
\end{proof}

We call Theorems \ref{thm:impossibility-a}-\ref{thm:impossibility-b} impossibility theorems in the sense of classical results in economic decision theory \citep{arrow1950difficulty,geanakoplos2005three} that a set of desirable axioms cannot be simultaneously satisfied. 
They imply any extension of contextuality to disturbing behaviors must abandon one or more properties that we have argued are central to KS contextuality.
Specifically, it must be inconsistent by labeling a behavior as noncontextual while labeling another---derived from the first by nesting, deterministic redundancy, classical post-processing, or independent combination---as contextual.
This inconsistency is not limited to the behaviors used in our proofs. Starting with any behavior that is classified as noncontextual, one can use \cref{ax: independence} to append the construction in the proof and proceed to a behavior containing a PR box. Thus the contrapositive of our argument implies that \emph{every} behavior must be labeled contextual.
Our result also applies to continuous-valued measures of contextuality \cite{abramsky2017fraction,kujala2019measures,vallee2024corrected}, because the inconsistency holds for the set of behaviors assigned zero contextuality (i.e., those admitting fully classical explanations).

Importantly, we define post-processing to depend only on the measured values of the observables and not on the context, following other recent work \cite{shane2019comonadic,Karvonen2019categories,barbosa2021closing}. 
If post-processing were allowed to depend on context then it could generate any behavior and the concept of contextuality would be empty \cite{dzhafarov2023redundancy}. 
Our definition of post-processing matches the Kochen-Specker theorem, which is concerned with algebraic relations among observables of the form $q'\equiv f(q_1,\dots,q_n)$ \cite{kochenspecker67}.
Moreover, KS-noncontextual models allow each observable to depend on the hidden state of the system but not on which other observables are measured
\footnote{Dzhafarov and Kujala \cite{dzhafarov2023redundancy} claim KS contextuality violates the post-processing axiom, contradicting our Theorem \ref{thm:KS-satisfies-axioms}. Unfortunately, most of their examples involve ill-defined functions that cannot be consistently defined on their full domains (see their Eqs. 15, 25-26, 28). In the remaining example (Eq. 20), the function depends on which observables have been measured, making it effectively a function of context.}.

Our results are closely related to resource theories of contextuality. 
Resource theories specify a set of objects (resources), a subset of free objects, free operations that transform objects to objects, and real-valued functions called monotones that cannot increase under free operations \cite{coecke2016resourceTheory}. In resource theories of contextuality, the objects are nondisturbing behaviors, free objects are the KS-noncontextual ones, and monotones are measures of contextuality such as contextual fraction \cite{abramsky2017fraction}.
The free operations which cannot create or increase contextuality
include those invoked in
Axioms \ref{ax: Nestedness}, \ref{ax: Coarse-graining}, \ref{ax: post-processing} and \ref{ax: independence} (nestedness, coarse-graining, post-processing and independence) \cite{abramsky2017fraction,shane2019comonadic,Karvonen2019categories,barbosa2021closing}. That is, each of these axioms is of the form ``if $P$ (or $P^{(1)}$ and $P^{(2)}$) is noncontextual then so is $P'\,$'', and in every case the implied transformation is a free operation.
Our only axioms not implied by resource theories are Determinism and Deterministic Redundancy, which are also the ones that lead from nondisturbing to disturbing behaviors.
Thus a corollary of our results is that any extension of contextuality obeying Axiom \ref{ax: determinism} (Theorem \ref{thm:impossibility-a}) or Axiom \ref{ax: deterministicRedundancy} (Theorem \ref{thm:impossibility-b}) must violate extant resource theories, in that contextuality can be created from noncontextuality using free operations. 

If one rejects our axioms or disregards the physical meaning of the transformations they involve then extensions of contextuality are possible and may be of mathematical interest \citep{dzhafarov2023neither,jones-consistify-reply}.
However, these abstractions risk losing contact with the scientific value of contextuality.
The significance of KS contextuality and Bell nonlocality stems from their role in no-go theorems \cite{kochenspecker67,bell1964epr} and fundamental questions of ontology \cite{einstein1935incomplete}. That is, establishing a physical system or theory as KS-contextual rules out a broad class of classical realist interpretations.
The axioms we consider here all derive directly from these considerations.
In contrast, CbD carries no no-go implications, or to the extent it does they have a very different character \cite{Jones2019mcontextuality}.
This calls into question applications of CbD
to the analysis of experimental data in physics and elsewhere \cite{kujala2015extendedNoncontextuality,amaral2019quantifyingExtended,cervantes2018snow,kupczynski2021contextuality,malinowski2018probing,wang2022significant,bacciagaluppi2014leggett,zhan2017experimental,fluhmann2018sequential,khrennikov2022complementarity,Kupczynski2023bell,dzhafarov2016contextuality,basieva2019true}, because it is unclear what can be learned from these analyses about the physical systems under study.

We hope the present work will help point the way toward a 
theory of extended contextuality that better preserves the physical meaning of KS contextuality.
For example, there may be some subclass of systems or behaviors, different than binary ones, on which extended contextuality could be consistently and fruitfully defined. 
A theory of extended contextuality satisfying all our axioms exists for \emph{cyclic systems} \cite{kujala2016proof}, which comprise many scenarios of greatest interest in the literature \cite{araujo2013all,leggett1985quantum,clauser1969proposed,klyachko2008simple}. However, this is largely because the set of cyclic systems is not closed under the relevant transformations and so most of the axioms do not apply. Perhaps an expansion of this set exists that is not as restrictive yet admits a consistent theory.

Another route forward may be for extensions of contextuality to take into account the physical properties of a measurement system, in contrast to the currently dominant model-independent approach.
One possibility is to restrict post-processings to somehow respect a system's compositional structure,
in particular with respect to causal effects of context (which are relevant only for disturbing systems).
For example, in the proof of Theorem \ref{thm:impossibility-a}, 
$P^{(1)}$ is constructed as the product of independent systems $P^{\rm A}\otimes P^{\rm B}$, with only $P^{\rm B}$ exhibiting disturbance. The post-processing $q_3\equiv f(q_1,q_2)$ then combines information from the two subsystems so that the decomposition is no longer evident in $P^{(2)}$. Existing frameworks for contextuality do not have the expressive capacity for this sort of structural information, but it may be possible to develop richer theories that do.
For example, Bell's theorems \cite{bell1964epr,bell1976beables} allow each observer's measurement to depend on their own setting but not on the other observer's setting, based on an assumption of spacelike separation. Perhaps a theory of extended contextuality could accommodate more complex constraints on allowable dependencies in specific physical scenarios. 
Constraints of this sort may be expressible in terms of a family of allowable causal graphs, generalizing recent graph-based approaches to contextuality \cite{wood2015tuning,cavalcanti2018tuning,Jones2019mcontextuality,pearl2021classical}, and contextuality might be defined as incompatibility of an empirical behavior with any hidden-variables model obeying those constraints. Unfortunately, current graph-based approaches either violate our axioms \cite{Jones2019mcontextuality} or are arguably too brittle in that they classify any behavior deviating from certain exact conditional independencies as noncontextual \cite{wood2015tuning,cavalcanti2018tuning}. An open question is whether a theory of extended contextuality can incorporate structural information in a way that escapes the present impossibility results yet remain robust to perturbations or sampling error.

\begin{acknowledgments}
\end{acknowledgments}

\bibliographystyle{apsrev4-2}
\bibliography{bibliography.bib}

\appendix

\section{End Matter}

\subsection{Relations Among the Axioms}\label{appendix: relations}

Some of the axioms we propose follow from others. We present these logical relationships here. As shorthand, we write A$n$ for Axiom $n$ and + for logical conjunction.
We also follow a convention of using lowercase names for transformations of behaviors and capitalized terms for the corresponding axioms. For example, post-processing is a type of transformation that appends a new observable while the Post-processing axiom is the requirement that this transformation not create contextuality.

We first formalize two axioms that were mentioned in the main text.

\emph{Joining} consists of explicitly incorporating the joint outcome of a collection of compatible observables.
Given a scenario $\mathcal{S} = (\mathcal{Q}, \mathcal{C},\prec, \mathcal{O})$ and a set of observables $\boldsymbol{q}=\{q_1,\dots,q_n\}\subset\mathcal{Q}$, their joint outcome can be defined as a vector-valued observable $q' \equiv (q_{1},\dots,q_{n})$.
Define an expanded scenario $\mathcal{S}'$ by $\mathcal{Q}'=\mathcal{Q}\cup\{q'\}$ with $q'\prec c$ whenever $\boldsymbol{q}\prec c$.
Given a behavior $P$ on $\mathcal{S}$, let $P'$ be the unique expansion of $P$ satisfying $P'_{(\boldsymbol{q},q')}(\boldsymbol{\out},\boldsymbol{\out}|c)=P_{\boldsymbol{q}}(\boldsymbol{\out}|c)$ for all $c\succ\boldsymbol{q}$ and $\boldsymbol{\out}\in O_{\boldsymbol{q}}$.
That is, $P'(q'=(q_1,\dots,q_n)|c)=1$.
Then we say $P'$ is obtained from $P$ by joining the variables in $\boldsymbol{q}$.
Similarly to the rationale offered for Post-processing (Axiom \ref{ax: post-processing}), joining introduces no new information and therefore should not create contextuality.

\begin{axiom}[Joining]\label{ax: Joining}
Given any noncontextual behavior and any set $\boldsymbol{q}$ of observables in that scenario, the expanded behavior obtained by joining the variables in $\boldsymbol{q}$ (i.e., incorporating their joint outcome as a new observable) is also noncontextual.
\end{axiom}

In the proof of Theorem \ref{thm:impossibility-a}, we invoke Post-processing and Nestedness to relabel certain observables in certain contexts.  This procedure can be regarded as a transformation in its own right, with a corresponding \emph{Relabeling} axiom asserting that contextuality cannot be created by relabeling an observable differently in different contexts:

\begin{axiom}[Relabeling]\label{ax: Relabeling} Let $P$ be any behavior involving an observable $q$, and let $\mathcal{C}_{q}=\{c:q\prec c\}$ be the collection of all contexts containing $q$. Let $\{\mathcal{C}_{1},\dots,\mathcal{C}_{n}\}$ be any partition of $\mathcal{C}_{q}$, and relabel $q$ as $q_{i}$ in each $\mathcal{C}_{i}$, where $q_i$ represents a new observable not already in $\mathcal{Q}$. That is, let $\mathcal{Q}'=\mathcal{Q}\cup\{q_1,\dots,q_n\}\setminus\{q\}$, with $O_{q_{i}} = O_{q}$ and $q_{i} \prec' c$ iff $c \in \mathcal{C}_{i}$ for $i=1,\dots,n$. Finally, let $P'$ be the natural translation of $P$ into this relabeled scenario. If $P$ is noncontextual then so is $P'$.
\end{axiom}

Since joining is a special case of post-processing, we have the following implication:

\begin{lemma}\label{lemma: P>J} Post-processing (A\ref{ax: post-processing}) $\implies$ Joining (A\ref{ax: Joining}) \end{lemma}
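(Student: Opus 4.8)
The plan is to recognize Joining as nothing more than the special case of Post-processing in which the chosen function $f$ is the identity on $O_{\boldsymbol{q}}$, exactly as anticipated in the remark following Definition~\ref{defn: post-processing}. Concretely, given a noncontextual behavior $P$ on a scenario $\mathcal{S}$ and a set $\{q_1,\dots,q_n\}$ of its observables with composite $\boldsymbol{q}$, I would invoke Axiom~\ref{ax: post-processing} for the post-processing determined by $f=\mathrm{id}_{O_{\boldsymbol{q}}}$, i.e. the function $f(v_1,\dots,v_n)=(v_1,\dots,v_n)$ that merely repackages a joint outcome of $\boldsymbol{q}$ as a single value. The whole argument is then a matter of checking that the construction this produces is literally the joining construction of Definition~\ref{defn: joining}.

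First I would confirm that $\mathrm{id}_{O_{\boldsymbol{q}}}$ is a legitimate post-processing: it is a function on $O_{\boldsymbol{q}}=\prod_i O_{q_i}$, so the induced outcome set is $O_{q'}=f(O_{\boldsymbol{q}})=O_{\boldsymbol{q}}$, which matches the requirement $O_{q'}\equiv O_{\boldsymbol{q}}$ imposed in Definition~\ref{defn: joining}. Next I would match the two scenarios and behaviors. The scenario expansion in Definition~\ref{defn: joining} is defined by explicit reference back to Definition~\ref{defn: post-processing}, so the expanded scenarios $\mathcal{S}'$ coincide term for term: $\mathcal{Q}'=\mathcal{Q}\cup\{q'\}$ with $q'\prec'c$ iff $\boldsymbol{q}\prec c$, and every original incidence preserved. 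For the behavior, substituting $f=\mathrm{id}$ into the defining equation of Definition~\ref{defn: post-processing} gives
\begin{align}
 P'\left((m,m)\,\vert\,(\boldsymbol{q},q'),c\right)=P\left(m\,\vert\,\boldsymbol{q},c\right)
\end{align}
for every $m\in O_{\boldsymbol{q}}$ and $c\succ\boldsymbol{q}$, which is precisely the condition that Definition~\ref{defn: joining} uses to single out the joined behavior. Since both definitions characterize $P'$ as the \emph{unique} expansion satisfying this relation, the two behaviors are identical as sets of distributions; in particular the outcome sets agree exactly, so no appeal to Axiom~\ref{ax: isomorphism} is required.

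The conclusion is then immediate: Axiom~\ref{ax: post-processing} asserts that the post-processed $P'$ is noncontextual whenever $P$ is, and since this $P'$ \emph{is} the behavior obtained by joining $\boldsymbol{q}$, the statement of Axiom~\ref{ax: Joining} follows. I do not expect any genuine obstacle here; the only work is bookkeeping, namely verifying that the identity qualifies as a post-processing and that the outcome sets and defining probability equations of the two constructions coincide. The one point worth stating explicitly is that the identification is exact rather than merely up to isomorphism, so the implication goes through on the strength of Axiom~\ref{ax: post-processing} alone.
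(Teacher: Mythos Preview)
Your proposal is correct and takes essentially the same approach as the paper: the paper's proof is the one-line observation that composite observables (joinings) are particular cases of post-processings, and you have simply spelled out this identification in detail by taking $f=\mathrm{id}_{O_{\boldsymbol{q}}}$ and verifying that the resulting scenario and behavior coincide exactly with those of Definition~\ref{defn: joining}.
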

\begin{proof}
A composite observable $q'$ defined by joining a collection $\boldsymbol{q}=\{q_1,\dots,q_n\}$ is a post-processing with $f(\boldsymbol{q})=(q_1,\dots,q_n)$. Therefore, if Post-processing is satisfied by some extension of contextuality, Joining also is.
\end{proof}

As mentioned in the main text, post-processing is a composition of joining and coarse-graining (i.e., a coarse-graining of the composite variable). Thus we have the following implication:

\begin{lemma}\label{lemma: J+C>P} Joining (A\ref{ax: Joining}) + Coarse-graining (A\ref{ax: Coarse-graining}) $\implies$ Post-processing (A\ref{ax: post-processing}) \end{lemma}
\begin{proof}
Given a noncontextual behavior $P$ and a post-processing $f(q_1,\dots,q_n)$, define the composite variable $q'=(q_1,\dots,q_n)$. Axiom \ref{ax: Joining} entails that appending $q'$ to $P$ yields a noncontextual behavior. Now define the function $g(\outs)\equiv f(u_1,\dots,u_n)$ so that $g(q')=f(q_1,\dots,q_n)$. Axiom \ref{ax: Coarse-graining} entails that replacing $q'$ with $g(q')$ yields a noncontextual behavior. This behavior is the same as appending the post-processing $f(q_1,\dots,q_n)$ to $P$.
\end{proof}

A coarse-graining is a post-processing of a single observable. Since we have defined post-processing as appending a new observable (and keeping the originals) while coarse-graining replaces the original observable, we must also include Nestedness in the following lemma:

\begin{lemma}\label{lemma: N+P>C} Nestedness (A\ref{ax: Nestedness}) + Post-processing (A\ref{ax: post-processing}) $\implies$ Coarse-graining (A\ref{ax: Coarse-graining}) \end{lemma}
\begin{proof}
Given a coarse-graining $q'\equiv g(q)$, define a post-processing of the singleton $\{q\}$ by $f(q)\equiv g(q)$ (i.e., $f$ and $g$ are the same function). Axiom \ref{ax: post-processing} entails that appending $q'$ to a noncontextual behavior $P$ yields a noncontextual $P'$. Axiom \ref{ax: Nestedness} then entails that dropping the original $q$ yields another noncontextual behavior $P''$, which is the coarse-graining of $P$.
\end{proof}

As mentioned in the main text, Deterministic Redundancy is a stronger form of Determinism:

\begin{lemma}\label{lemma: K+DR>D} Deterministic Redundancy (A\ref{ax: deterministicRedundancy}) $\implies$ Determinism (A\ref{ax: determinism}) \end{lemma}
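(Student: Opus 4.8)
The plan is to realize an arbitrary deterministic behavior by starting from a trivially noncontextual behavior and appending its measurements one at a time, each as a deterministic observable, so that Deterministic Redundancy (A\ref{ax: deterministicRedundancy}) can be invoked at every step while KS Compatibility (A\ref{ax: KScompatibility}) disposes of the base case.

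Let $P$ be a deterministic behavior on $\mathcal{S} = (\mathcal{Q}, \mathcal{C}, \prec, \mathcal{O})$, and for each context $c$ let $s^c \in O^c$ be the unique outcome with $P(s^c|c) = 1$, writing $s^c_q$ for its $q$-component. First I would consider the ``unmeasured'' behavior $P_0$ on the scenario $\mathcal{S}_0 = (\mathcal{Q}, \mathcal{C}, \emptyset, \mathcal{O})$, in which no observable is measured in any context. Here each $O^c$ is the empty product (a single point), so $P_0$ is forced, and since no two contexts share an observable it is vacuously nondisturbing; a global probability assignment exists trivially (any distribution on $\prod_{q\in\mathcal{Q}}O_q$ marginalizes to the single-point distributions), so $P_0$ is KS-noncontextual and hence, by A\ref{ax: KScompatibility}, noncontextual.

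Next I would enumerate the measurement pairs of the target as $\prec = \{(q_1,c_1),\dots,(q_N,c_N)\}$ (finite by assumption) and build a chain $P_0,P_1,\dots,P_N$, where $P_i$ is obtained from $P_{i-1}$ by adding the observable $q_i$ to context $c_i$ with the deterministic outcome $u_i \equiv s^{c_i}_{q_i}$. Since $q_i\in\mathcal{Q}$ but $q_i\nprec_{i-1}c_i$ at that stage, this is exactly the operation licensed by A\ref{ax: deterministicRedundancy} (using its $q\nprec c$ branch rather than the $q\notin\mathcal{Q}$ branch). The expansion is unique: because the new variable has a point-mass marginal, the joint distribution in $c_i$ must be $P_{i-1}(\cdot|c_i)\otimes\delta_{u_i}$ and all other contexts are unchanged; in particular each $P_i$ is again deterministic. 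By A\ref{ax: deterministicRedundancy}, noncontextuality propagates along the chain, so $P_N$ is noncontextual.

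Finally I would check that $P_N = P$. No new observables are introduced (each $q_i$ is already in $\mathcal{Q}$), so the scenario of $P_N$ has observable set $\mathcal{Q}$ and relation $\prec_N = \prec$; and for each $c$ the value added for every $q\prec c$ was $s^c_q$, making $P_N(\cdot|c)$ the point mass at $(s^c_q)_{q\prec c} = s^c$, which equals $P(\cdot|c)$. Hence $P$ is noncontextual. I do not expect a deep obstacle: the argument is a routine induction, and the only points needing care are confirming the triviality of the base case under KS Compatibility and verifying that the intermediate behaviors remain deterministic with exactly the right values, so that the reconstructed behavior is literally $P$ rather than merely some deterministic expansion of it.
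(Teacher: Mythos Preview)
Your proof is correct and follows essentially the same induction as the paper's: start from a trivially KS-noncontextual behavior (invoking A\ref{ax: KScompatibility}) and adjoin the deterministic pairs $(q,c)$ one by one using A\ref{ax: deterministicRedundancy}. The only difference is cosmetic: you take the base case to be the scenario with \emph{empty} relation $\prec$, whereas the paper seeds the induction with a single pair $(q_0,c_0)$; both base cases are nondisturbing and KS-noncontextual for the same trivial reasons, and the inductive step is identical.
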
 
\begin{proof}
Given a deterministic behavior $P$ on a scenario $\mathcal{S}=(\mathcal{Q},\mathcal{C},\prec,\mathcal{O})$, pick some $c_{0} \in \mathcal{C}$ and $q_{0} \prec c_{0}$ and let $\mathcal{S}'=(\{q_0\},\{c_0\},\prec',\{O_{q_0}\})$ be the trivial sub-scenario containing only $q_0$ and $c_0$ with $q_0\prec'c_0$, and let $P'$ be the restriction of $P$ to $\mathcal{S}'$. $P'$ is nondisturbing and deterministic and hence KS-noncontextual, and thus it is noncontextual under any extension of contextuality. 
Next, iteratively expand $\mathcal{S}'$ to $\mathcal{S}$ by substituting $\prec'\,\to\,\prec'\!\cup\,\{(q,c)\}$ for all pairs $q\prec c$, likewise expanding $P'$ at each step by defining $P'_q(\cdot\vert c)=P_q(\cdot\vert c)$. 
Note that $P$ is uniquely determined by the marginals $P_q(\cdot \vert c)$, since $P(\cdot \vert c) = \prod_{q \prec c} P_q(\cdot \vert c)$ for every context $c$.
Repeated application of Deterministic Redundancy (Axiom \ref{ax: deterministicRedundancy}) entails that $P$ is noncontextual.
\end{proof}

Although relabeling does not fit the definition of post-processing because it depends on context, Axiom \ref{ax: Relabeling} is implied by Post-processing or Joining in conjunction with Nestedness:

\begin{lemma}\label{lemma: N+P>R} Nestedness (A\ref{ax: Nestedness}) + Post-processing (A\ref{ax: post-processing}) $\implies$ Relabeling (A\ref{ax: Relabeling}) \end{lemma}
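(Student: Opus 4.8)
The plan is to realize relabeling as an \emph{expansion followed by a marginalization}, so that it decomposes into one application of Post-processing (Axiom~\ref{ax: post-processing}) and one of Nestedness (Axiom~\ref{ax: Nestedness}). The key observation is that relabeling $q$ as $q_1,\dots,q_n$ across a partition $\{\mathcal{C}_1,\dots,\mathcal{C}_n\}$ of $\mathcal{C}_q$ is equivalent to first introducing $n$ perfect copies of $q$, each measured in every context of $\mathcal{C}_q$, and then discarding the original $q$ while restricting each copy $q_i$ to the contexts in $\mathcal{C}_i$.

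First I would construct the expanded behavior. Starting from the noncontextual $P$, apply Post-processing $n$ times, each time to the singleton $\{q\}$ with $f$ the identity map on $O_q$ (the universal case noted after Definition~\ref{defn: post-processing}). The $i$-th application appends an observable $q_i$ with $O_{q_i}=O_q$, measured in every context with $q\prec c$, and satisfying $q_i=q$ with probability $1$ there (Definition~\ref{defn: post-processing}). Since $q$ remains in the scenario throughout, each step is a legitimate post-processing, so by Axiom~\ref{ax: post-processing} the resulting behavior $\tilde P$, with observable set $\mathcal{Q}\cup\{q_1,\dots,q_n\}$, is noncontextual. In $\tilde P$ every context $c\in\mathcal{C}_q$ carries all of $q,q_1,\dots,q_n$, mutually equal with probability $1$.

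Next I would extract $P'$ as a marginal of $\tilde P$. Take the sub-scenario with observables $(\mathcal{Q}\setminus\{q\})\cup\{q_1,\dots,q_n\}$, keeping $\mathcal{C}$ and the relation $r\prec c$ for $r\ne q$, and restricting each copy via $q_i\prec' c$ iff $c\in\mathcal{C}_i$. This is a valid sub-scenario of the expanded scenario (each restriction only shrinks $\prec$, and $\mathcal{C}_i\subseteq\mathcal{C}_q$), and it is exactly the relabeled scenario of Axiom~\ref{ax: Relabeling}. By Nestedness the marginal behavior of $\tilde P$ on it is noncontextual, and it then remains to identify this marginal with the natural translation $P'$.

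The crux---and the one step requiring care---is verifying that the marginal reproduces $P'$ at the level of full joint distributions, not merely single-observable marginals. For a context $c\in\mathcal{C}_i$ the retained observables are $\{r\ne q: r\prec c\}\cup\{q_i\}$; because $q_i=q$ almost surely under $\tilde P$, their joint law coincides with the joint law of $\{r\ne q: r\prec c\}\cup\{q\}$ under $P$, i.e.\ with $P(\cdot|c)$ after renaming the $q$-coordinate to $q_i$. For contexts $c\notin\mathcal{C}_q$ none of the copies appear and the marginal is simply $P(\cdot|c)$. In both cases the marginal agrees with $P'(\cdot|c)$, so $P'$ is noncontextual, completing the argument.
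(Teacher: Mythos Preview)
Your proposal is correct and follows essentially the same approach as the paper: create $n$ identity-function copies of $q$ via Post-processing, then apply Nestedness to drop $q$ and restrict each $q_i$ to its part $\mathcal{C}_i$ of the partition. Your write-up is in fact more careful than the paper's, which leaves the verification that the marginal equals $P'$ as ``easy to see'' rather than spelling out the joint-distribution check you give in your final paragraph.
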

\begin{proof}
Let $q$ be an observable of some noncontextual behavior $P$ and let $\{\mathcal{C}_{1},\dots,\mathcal{C}_{n}\}$ be any partition of $\mathcal{C}_{q}=\{c:q\prec c\}$. For each $i=1,\dots,n$, let $q_{i}$ be a copy of $q$ defined by post-processing, using the identity function $f(u)\equiv u$. Axiom \ref{ax: post-processing} entails that the expanded behavior obtained by incorporating every $q_{i}$ is noncontextual. Axiom \ref{ax: Nestedness} then entails that, if we remove each $q_{i}$ from all contexts $c \notin \mathcal{C}_{i}$ (i.e., imposing $q_{i} \prec' c$ if and only if $c \in \mathcal{C}_{i}$) and drop $q$ altogether, the resulting behavior remains noncontextual. It is easy to see that this behavior corresponds to relabeling $q$ as prescribed in Axiom \ref{ax: Relabeling}.
\end{proof}

\begin{lemma}\label{lemma: N+J>R} Nestedness (A\ref{ax: Nestedness}) + Joining (A\ref{ax: Joining}) $\implies$ Relabeling (A\ref{ax: Relabeling}) \end{lemma}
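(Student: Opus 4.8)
The plan is to build the relabeled behavior $P'$ out of $P$ in two stages: first use Joining (A\ref{ax: Joining}) to manufacture $n$ redundant copies of $q$, and then use Nestedness (A\ref{ax: Nestedness}) to prune each copy down to the block of contexts it is meant to label. Since every step preserves noncontextuality, noncontextuality of $P$ propagates to $P'$.

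Concretely, I would first regard the singleton $\{q\}$ as a composite observable (it is trivially jointly measurable, being contained in every $c \succ q$) and apply Joining to it. By Definition \ref{defn: joining} this appends a new observable $q_1$ with $O_{q_1}=O_q$, sets $q_1 \prec c$ for every $c \in \mathcal{C}_q$, and forces $q_1 = q$ with probability $1$. Repeating this $n$ times --- joining $\{q\}$ afresh at each stage, which is legitimate because the set of contexts containing $\{q\}$ remains exactly $\mathcal{C}_q$ --- produces a behavior $\tilde P$ on an enlarged scenario $\tilde{\mathcal{S}}$ with observables $\mathcal{Q}\cup\{q_1,\dots,q_n\}$, in which each $q_i$ is measured throughout $\mathcal{C}_q$ and satisfies $q_i=q$ with probability $1$. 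By A\ref{ax: Joining} each join preserves noncontextuality, so $\tilde P$ is noncontextual.

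Next I would pass to the sub-scenario $\mathcal{S}'$ of $\tilde{\mathcal{S}}$ obtained by (i) discarding $q$ altogether and (ii) restricting each $q_i$ to the block $\mathcal{C}_i$, i.e., declaring $q_i \prec' c$ iff $c \in \mathcal{C}_i$, while leaving every original $r\ne q$ untouched. Because $\mathcal{C}_i \subset \mathcal{C}_q$ and $\{q_1,\dots,q_n\}\cup(\mathcal{Q}\setminus\{q\}) \subset \tilde{\mathcal{Q}}$, this is a genuine sub-scenario in the sense of Definition \ref{defn: marginal behavior}, and it is precisely the relabeled scenario of A\ref{ax: Relabeling}. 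The associated marginal behavior is obtained by marginalizing $\tilde P(\cdot|c)$ onto the observables retained in $c$; since the partition $\{\mathcal{C}_1,\dots,\mathcal{C}_n\}$ is disjoint, a context $c\in\mathcal{C}_i$ retains exactly the original observables other than $q$ together with the single copy $q_i$, and because $q_i=q$ almost surely the resulting distribution coincides with $P(\cdot|c)$ read with $q$ renamed to $q_i$. Contexts outside $\mathcal{C}_q$ are unaffected. Hence the marginal behavior is exactly $P'$ (invoking A\ref{ax: isomorphism} if one prefers to treat $P'$ only up to the natural identification of names), and Nestedness then yields that $P'$ is noncontextual.

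The only points requiring care --- rather than a genuine obstacle --- are bookkeeping ones: checking that iterating Joining on the fixed singleton $\{q\}$ is well defined at each stage (which holds because inserting copies never alters the set $\mathcal{C}_q$ of contexts containing $q$), and verifying that the marginalization in the Nestedness step reproduces $P'$ on the nose, which follows from the almost-sure equalities $q_i=q$ together with disjointness of the partition blocks. No cancellation or inequality enters, so the argument is purely structural.
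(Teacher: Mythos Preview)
Your proposal is correct and follows essentially the same approach as the paper: use Joining on the singleton $\{q\}$ to create $n$ almost-sure copies $q_1,\dots,q_n$, then apply Nestedness to drop $q$ and restrict each $q_i$ to its block $\mathcal{C}_i$. The paper's proof is terser (it simply points to the argument for Lemma~\ref{lemma: N+P>R} with post-processing replaced by joining on a singleton), but the two-stage construction and the verification that the resulting marginal is exactly $P'$ are the same.
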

\begin{proof}
Another way of copying $q$ is by incorporating the composite observable $\{q\}$ as in Axiom \ref{ax: Joining}. The same reasoning as in the proof of Lemma \ref{lemma: N+P>R} can then be applied. 
\end{proof}

Lastly, we note that Determinism and Deterministic Redundancy (Axioms \ref{ax: determinism}-\ref{ax: deterministicRedundancy}) are the only axioms that provide a logical connection from nondisturbing to disturbing systems. That is, it is easy to show that marginal behaviors, coarse-grainings, post-processings, products, joinings and relabelings of nondisturbing behaviors are all nondisturbing. Therefore, given the KS contextuality of all nondisturbing behaviors, Axioms \ref{ax: Nestedness}, \ref{ax: Coarse-graining}, \ref{ax: post-processing}, \ref{ax: independence}, \ref{ax: Joining} and \ref{ax: Relabeling} taken alone have no implications for disturbing systems. In contrast, deterministic behaviors and deterministic expansions (in the sense of Axiom \ref{ax: deterministicRedundancy}) of nondisturbing behaviors can be disturbing. Hence, Determinism and Deterministic Redundancy are the only axioms that prevent a trivial extension of contextuality whereby all disturbing systems are classified as contextual (or noncontextual). 

\subsection{Proof of Theorem \ref{thm:KS-satisfies-axioms}}\label{appendix: theorem-proof}

Our proof of Theorem \ref{thm:KS-satisfies-axioms} uses noncontextual models as defined by Kochen and Specker \cite{kochenspecker67}. 
We write a noncontextual model as $\mathcal{M}=(\boldsymbol{\Lambda},\mathcal{F})$, where $\boldsymbol{\Lambda}=(\Lambda,\Sigma,\Pr)$ is a probability space and $\mathcal{F}$ is a family of measurable random variables $F_q:\Lambda\to O_q$. 
We say $\mathcal{M}$ is a model for a behavior $P$ if they agree on the joint distribution of observables in each context: $P(\cdot|c) = \Pr(\mathcal{F}_{\boldsymbol{c}})$ for all $c\in\mathcal{C}$, where $\mathcal{F}_{\boldsymbol{c}}=\{F_q:q\prec c\}$.
It is well known that such a model is equivalent to a global probability assignment \cite{fine1982hidden,abramsky2011sheaf}.

For Axiom \ref{ax: determinism} (Determinism), the proof directly constructs a noncontextual model for any nondisturbing deterministic behavior. The other axioms are all of the form that if behavior $P$ (or $P^{(1)}$ and $P^{(2)}$) is noncontextual then so is some $P'$. In each case, the proof assumes a noncontextual model for $P$ and constructs one for $P'$. In the case of Deterministic Redundancy, the proof applies whenever $P'$ is nondisturbing, whereas for all of the other axioms, if $P$ is nondisturbing then $P'$ necessarily is too.

\textbf{Determinism (Axiom \ref{ax: determinism})}:
Given a deterministic and nondisturbing behavior $P$ with values $\{u_q:q\in\mathcal{Q}\}$ satisfying $P_q(u_q\vert c)=1$ whenever $q\prec c$, there is a trivial noncontextual model for it with a single hidden state: $\Lambda=\{\lambda\}$, $\Pr[\lambda]=1$, $\forall_q F_q(\lambda)=u_q$.

For Axioms \ref{ax: deterministicRedundancy}, \ref{ax: Nestedness} and \ref{ax: post-processing}, assume we are given a behavior $P$ with a noncontextual model $\mathcal{M}=(\boldsymbol{\Lambda},\mathcal{F})$.

\textbf{Deterministic Redundancy (Axiom \ref{ax: deterministicRedundancy})}:
Let $P'$ be a deterministic expansion of $P$ defined by incorporating $q\prec'c$ with $P'_q(u|c)=1$ for some $q,c,u$. Assume $P'$ is nondisturbing. If $q$ was in the original scenario ($q\in\mathcal{Q}$) then $\mathcal{M}$ is automatically a model for $P'$, because nondisturbance of $P'$ implies $P_q(u|c)=1$ for all $c\succ q$ and therefore $\Pr(F_q=u)=1$. If $q$ is new to the scenario ($q\notin\mathcal{Q}$) then extending $\mathcal{M}$ by defining $\mathcal{F}'=\mathcal{F}\cup\{F_q\}$ with $\forall_\lambda F_q(\lambda)=u$ gives a noncontextual model for $P'$.

\textbf{Nestedness (Axiom \ref{ax: Nestedness})}:
$\mathcal{M}$ is automatically a model for any restriction of $P$. Indeed, the assumption $P(\cdot|c) = \Pr(\mathcal{F}_{\boldsymbol{c}})$ implies $P_{\boldsymbol{c}'}(\cdot|c) = \Pr(\mathcal{F}_{\boldsymbol{c}'})$ for any $\boldsymbol{c}'\subset\boldsymbol{c}$.

\textbf{Coarse-graining (Axiom \ref{ax: Coarse-graining})}:
This follows from Lemma \ref{lemma: N+P>C} and the proofs given here regarding Axioms \ref{ax: Nestedness} and \ref{ax: post-processing}.

\textbf{Post-processing (Axiom \ref{ax: post-processing})}:
Let $P'$ be the expansion of $P$ obtained by incorporating any post-processing $q'\equiv f(\boldsymbol{q})$ of a collection $\boldsymbol{q} =\{q_{1},\dots,q_{n}\} \subset \mathcal{Q}$, as in Axiom \ref{ax: post-processing}. Define $F_{q'}: \Lambda \rightarrow O_{q'}$ by $F_{q'}(\lambda) = f(F_{q_1}(\lambda),\dots,F_{q_n}(\lambda))$. Then $\mathcal{M}'=(\boldsymbol{\Lambda},\mathcal{F}\cup\{F_{q'}\})$ is a noncontextual model for $P'$.

\textbf{Independence (Axiom \ref{ax: independence})}:
Let $P^{(1)}$ and $P^{(2)}$ be behaviors with noncontextual models $\mathcal{M}_{1} = (\boldsymbol{\Lambda}_{1},\mathcal{F}_{1})$ and $\mathcal{M}_{2} = (\boldsymbol{\Lambda}_{2},\mathcal{F}_{2})$. In the product space $\boldsymbol{\Lambda} \equiv (\Lambda_{1} \times \Lambda_{2},\Sigma_{1} \otimes \Sigma_{2},\Pr_1\times\Pr_2)$, define $F^{\prime}_q$ for each $q\in\mathcal{Q}_k$ ($k=1,2$) by $F^{\prime}_q(\lambda_{1},\lambda_{2}) = F_{k,q}(\lambda_{k})$. It is easy to see that $\mathcal{M}'=(\boldsymbol{\Lambda},\mathcal{F}')$ is a noncontextual model for $P^{(1)} \times P^{(2)}$.

\textbf{Joining and Relabeling (Axioms \ref{ax: Joining} and \ref{ax: Relabeling})}:
These follow from Lemmas \ref{lemma: P>J} and \ref{lemma: N+P>R}.
     \ \ \hfill\ $%
\square \medskip $

\end{document}